\newtheorem{theorem}{Theorem}[section]
\newtheorem{lemma}[theorem]{Lemma}
\newtheorem{remark}[theorem]{Remark}
\begin{document}
%
\title{Available Degrees of Spatial Multiplexing of a Uniform Linear Array with Multiple Polarizations: A Holographic Perspective}
%
%
%

\author{Xavier~Mestre,~\IEEEmembership{Senior Member,~IEEE,}
        Adrian~Agustin,~\IEEEmembership{Senior Member,~IEEE,}
        David~Sardà 
\thanks{This work is supported by the grant from the Spanish ministry of economic affairs and digital transformation and of the European union - NextGenerationEU UNICO-5G I+D/AROMA3D-Earth (TSI-063000-2021-69), by Grant 2021 SGR 00772 funded by the Universities and Research Department from Generalitat de Catalunya, and by the Spanish Government through the project 6G AI-native Air Interface (6G-AINA, PID2021-128373OB-I00 funded by MCIN/AEI/10.13039/501100011033) ERDF A way of making Europe. }
}


%
%

\markboth{Submitted to IEEE Open Journal on Signal Processing}{ICASSP 2025}
%



\maketitle

\begin{abstract}
The capabilities of multi-antenna technology have recently been significantly enhanced by the proliferation of extra large array architectures. The high dimensionality of these systems implies that communications take place in the near-field regime, which poses some questions as to their effective performance even under simple line of sight configurations. In order to study these limitations, a uniform linear array (ULA) is considered here, the elements of which are three infinitesimal dipoles transmitting different signals in the three spatial dimensions. The receiver consists of a single element with three orthogonal infinitesimal dipoles and full channel state information is assumed to be available at both ends. A capacity analysis is presented when the number of elements of the ULA increases without bound while the interelement distance converges to zero, so that the total aperture length is kept asymptotically fixed. In particular, the total number of available spatial degrees of freedom is shown to depend crucially on the receiver position in space, and closed form expressions are provided for the different achievability regions. From the analysis it can be concluded that the use of three orthogonal polarizations at the transmitter guarantees the universal availability of at least two spatial streams everywhere.
\end{abstract}

\begin{IEEEkeywords}
Holographic MIMO, near-field communications, spatial multiplexing, extra large antenna arrays.
\end{IEEEkeywords}

%
\IEEEpeerreviewmaketitle

\section{Introduction}

Current wireless communication systems strongly rely on the use of multi-antenna technology to push the performance limits in terms of reliability and throughput. As the frequency of operation of these systems moves to higher, more available, bands e.g., millimeter wave (mmWave) and sub-terahertz (THz) frequencies \cite{Priebe14,Shafi18}, the use of multiantenna technology has become necessary element to overcome the more adverse propagation conditions. Given the stronger propagation loss and the higher material absorption at these frequency bands, wireless communication channels tend to  present a strong line-of-sight component. 
In parallel with this trend, massive antenna deployments and extra large antenna arrays are progressively being introduced in order to meet the strict performance requirements of future wireless systems. The large size of these antenna deployments is becoming comparable to the distance between transmitter and receiver, which is in turn decreasing in order to compensate for the high propagation loss. This is a radical change with respect to conventional wireless architectures, which are typically assumed to work in the far field. In these new scenarios, the electromagnetic wave fronts impinging on the multi-antenna system can no longer be assumed to be planar, and more complicated near field channel models are instead applicable. This may have some additional performance benefits in the line of sight regime, since the near-field structure of the channel becomes geometrically similar to a conventional multi-path configuration. In fact, contrary to what happens in the far-field regime where line of sight MIMO channels are essentially low rank, in the near-field a MIMO channel matrix quickly becomes full rank thanks to the different electric distances with the various transmit elements \cite{Driessen99,Jiang05,Bohagen09,Sanguinetti23}. The purpose of this paper is to explore the spatial multiplexing capabilities that can be exploited in this near-field configuration. 

There is a very rich body of recent literature studying the capabilities of multi-antenna technologies in the near-field. For example, the work in \cite{Larsson05,Torkildson2011} demonstrate how parallel uniform linear arrays (ULAs) with an appropriate inter-element separation distance can generate a MIMO channel matrix with identical singular values, which is optimal for spatial multiplexing. 
More general orientations were studied in \cite{Gesbert02}, which evaluated the sensitivity of the rank of the line-of-sight MIMO channel matrix and established how full channel orthogonality conditions (which are the best for spatial multiplexing) can be achieved in short transmission distances. Other analyses considered different geometric optimizations \cite{Bohagen07,Sarris07} under the assumption that the transmission distance is much larger than the physical size of the transmit and receive arrays.
Several other works have focused on the optimization of the geometry and orientation of the ULAs in order to achieve optimal performance, taking into account a wide range of values of the signal to noise ratio (SNR) \cite{Do21}.
More recent approaches have focused on refining the different approximations in the channel model \cite{RuizSicilia24} that is used to optimize the transmit and receive array configurations.

In parallel with all this, multiple studies have additionally considered the use of holographic surfaces \cite{Dardari21,Huang20,Gong24survey,gong2024near,Pizzo22}, which are also referred to as continuous aperture architectures \cite{Sayeed10,Sayeed11} or large intelligent surfaces \cite{Hu18} in the array processing literature. These array configurations assume an asymptotically large number of antennas that are separated by an asymptotically small inter-element distance. 
Holographic surfaces function as a continuous antenna array that can dynamically shape and control the signal generation and its electromagnetic projection, creating ``holograms" of electromagnetic fields (see \cite{Gong24survey} for an extensive review of hardware architectures and design methodologies of holographic surfaces). 

Most of the previous literature has focused on the optimization of both signal processing procedures and antenna array configurations by considering an abstract model of the radiating pattern of both transmit and receive antennas. One of the few recent exceptions is the work in \cite{Wei23}, which considers a multi-user MIMO setting based on a channel model with three spatial polarizations based on the Green dyadic function. Building on this channel model, the authors introduced two precoding schemes aimed at mitigating cross-polarization and inter-user interference. This channel model has also been used in \cite{gong2024near} to characterize the capacity of arbitrary placed holographic surfaces made of densely-packed surface radiators. A slightly simpler channel model was recently adopted in \cite{agustin23}, where we considered infinitesimal dipoles in three different orthogonal polarizations at both sides of the communication link in a holographic set-up (high number of elements, short inter-element distance). 

In fact, the discrete dipole formalism has been widely used to characterize metasurface antennas \cite{Imani20}, in some cases by directly modeling the interaction among the different dipoles \cite{Draine94,Landy14,Johnson14,Pulido17,Yoo19}, and in some cases by simply omitting these coupling effects altogether \cite{Lipworth13,Lipworth15,Smith17,Imani18}. 
The discrete dipole approximation is a very simple abstraction of a holographic configuration where infinitesimal dipoles can be seen as the atomic components that may ultimately conform these radiating elements. It was recently shown in \cite{agustin24twc} that the behavior of both ULAs and uniform planar arrays (UPAs) can essentially be described by the relative geometry of the scenario, so that one can optimize the size of the radiating surface to ensure an optimum transmission rate at a certain point in space. 

The present paper strongly builds upon the asymptotic results in \cite{agustin24twc} by further investigating the analytical expressions for the three-polarized channel matrices in the holographic regime. More specifically, the expressions derived in \cite{agustin24twc} are employed here to establish the multiplexing capabilities of a ULA as a function of the SNR. We provide closed form analytical expressions for the coverage areas where a holographic array is able to support one, two or three spatial streams depending on the number of polarizations that are used at the transmitter side. Results allow to conclude that the use of three orthogonal polarizations guarantees the universal availability of two spatial streams everywhere.

\section{Scenario and Signal Model}

Let us consider a scenario where we have a transmitting uniform linear array consisting of $2M+1$ elements ($M>0$) separated by a distance of $\Delta_T$ meters, each one of them equipped with three orthogonal infinitesimal dipoles, see further Fig.~\ref{fig:Scenario}. 
The receiver employs a single spatial element, also consisting of three orthogonal dipoles. We assume that the three dipoles at each ULA element are fed with different radio frequency (RF) chains, so that the transmitter digitally generates up to $3(2M+1)$ signals to be transmitted.
Without loss of generality, we will assume that the three dipoles at each element of the transmit array are aligned with the coordinate system. 
Let $\mathbf{H}_m$ denote the $3 \times 3$ channel response between the $m$th element of the ULA ($m=-M,\ldots,M$) and the receiver. Assuming a narrowband transmission at a wavelength $\lambda$ and disregarding the reactive terms (which decay quite fast with the distance), the electromagnetic channel takes the form \cite{Poon05} 
\begin{equation} \label{eq:channelModel}
    \mathbf{H}_m = \frac{\xi}{ \lambda \Vert \mathbf{r}_m \Vert} \exp \left(-\mathrm{j} \frac{2 \pi}{\lambda} \Vert \mathbf{r}_m \Vert\right) \mathcal{Q} \left[ \mathbf{I}_3 - \frac{\mathbf{r}_m\mathbf{r}_m^H}{\Vert\mathbf{r}_m \Vert^2}\right] 
\end{equation}
where $\xi$ is a certain complex constant\footnote{If $\xi$ is chosen as $\xi = {\mathrm{j}\eta}/{2}$ where $\eta$ denotes the permittivity of the medium, then the three columns of $\mathbf{H}_m$ may be identified as the electric field generated by each of the three orthogonal dipoles at the $m$th transmitter.}, $\mathbf{r}_m$ is the vector between the $m$th transmitting element of the ULA and the receiver and $\mathcal{Q}$ is a rotation matrix that accounts for the fact that the dipoles at the receiver may not be aligned with those at the transmitter. It can be seen that this rotation matrix does not really affect the capacity of the channel when the receiver has access to the three polarizations. The corresponding rotation is essentially superfluous to the discussion in this work, so that we will assume from now on that $\mathcal{Q} = \mathbf{I}_3$.

We will assume for simplicity that the receiver is located on the $yz$-plane, so that the vector $\mathbf{r}_m$ takes the form
\begin{equation}
\mathbf{r}_m = \left[0,D \sin \theta - m \Delta_T , D \cos \theta \right]^T
\end{equation}
where $D>0$ is the distance between the center of the array and the receiver and where $\theta \in (-\pi/2,\pi/2)$ is the elevation angle (see further Fig.~\ref{fig:Scenario}). 
\begin{figure}[h] 
    \centering 
\centerline{

\begin{tikzpicture}[>={latex},scale=0.4,every node/.style={scale=0.4}]

\def\tripole at (#1,#2){
\node (A) [scale = 0.3, cylinder, shape border rotate=90, shape aspect=.5, draw = red, minimum height= 60, minimum width=1] at (#1+0,#2+0.35) {};
\node (A) [scale = 0.3, cylinder, shape border rotate=90, shape aspect=.5, draw = red, minimum height= 60, minimum width=1] at (#1+0,#2-0.35) {};
\node (A) [scale = 0.3,cylinder, shape border rotate=0, shape aspect=.5, draw = blue, minimum height= 60, minimum width=1] at (#1+0.35,#2+0) {};
\node (A) [scale = 0.3,cylinder, shape border rotate=0, shape aspect=.5, draw = blue, minimum height= 60, minimum width=1] at (#1-0.35,#2+0) {};
\node (A) [scale = 0.3,cylinder, rotate=43, shape aspect=.5, draw = black!60!green, minimum height= 60, minimum width=1] at (#1+0.35/1.25,#2+0.32/1.25) {};
\node (A) [scale = 0.3,cylinder, rotate=43, shape aspect=.5, draw = black!60!green, minimum height= 60, minimum width=1] at (#1-0.35/1.2,#2-0.33/1.2) {};
}

\def\tripoleGros at (#1,#2){
\node (A) [cylinder, shape border rotate=90, shape aspect=.5, draw = red, minimum height= 60, minimum width=1] at (#1+0,#2+1.3) {};
\node (A) [cylinder, shape border rotate=90, shape aspect=.5, draw = red, minimum height= 60, minimum width=1] at (#1+0,#2-1.3) {};
\node (A) [cylinder, shape border rotate=0, shape aspect=.5, draw = blue, minimum height= 60, minimum width=1] at (#1+1.3,#2+0) {};
\node (A) [cylinder, shape border rotate=0, shape aspect=.5, draw = blue, minimum height= 60, minimum width=1] at (#1-1.3,#2+0) {};
\node (A) [cylinder, rotate=43, shape aspect=.5, draw = black!60!green, minimum height= 60, minimum width=1] at (#1+1.3/1.25,#2+1.1/1.25) {};
\node (A) [cylinder, rotate=43, shape aspect=.5, draw = black!60!green, minimum height= 60, minimum width=1] at (#1-1.3/1.2,#2-1.5/1.2) {};
}

\draw [->] (0,0) node (v4) {} -- (12,0) node[right] {\LARGE $z$};
\draw [->] (0,-1) -- (0,5) node[left] {\LARGE $x$};
\draw [<-] (-1.3*4,-1.2*4)  node[below] {\LARGE $y$} -- (1.3*4,1.2*4) ;
\draw [<->] (-1.3*3.1-2.3,-1.2*3.1)  -- (1.3*3.1-2.3,1.2*3.1) ;
\node [right] at (-2.25,1.25) {\LARGE $2L$};

 \foreach \x in {-3,...,3}
       	\tripole at (1.3*\x,1.2*\x);


%


\draw [<->] (-2.2,-1.2) -- (-0.9,0);
\node [right] at (-2.5,-0.3) {\LARGE $\Delta_T$};

\node [right] at (-5.5,-3) {\LARGE $M$};
\node [right] at (-5.6,-1.7) {\LARGE $(M-1)$};
\node [right] at (0.0,3.5) {\LARGE $-(M-1)$};
\node [right] at (2.5,4.5) {\LARGE $-M$};
 
 
\node [right] at (5.8,1.6) {\LARGE $D$};

\tripoleGros at (7, -3.5); 

\begin{scope} [scale=1.5, shift={(-5.5,1.5)}]
\draw[red] (10.15,-3.62) .. controls (10.35,-4.1) and (10.9,-4.1) .. (11.3,-4.2);
\draw[red] (10.15,-4) .. controls (10.15,-3.7) and (10.5,-4.3) .. (11.3,-4.35);
\draw [blue] (9.99,-3.83) .. controls (10.34,-3.74) and (9.65,-3.54) .. (9.2,-3.25);
\draw [blue] (10.37,-3.84) .. controls (10.14,-3.81) and (9.65,-3.3) .. (9.2,-3.1);
\draw [black!60!green] (10.36,-3.68) .. controls (9.95,-4.05) and (9.7,-4.05) .. (9.35,-4.1);
\draw [black!60!green] (9.95,-4.15) .. controls (10.05,-4.05) and (9.8,-4.05) .. (9.35,-4.25);
\draw [blue] (9.2,-3) rectangle node {RF} (8.5,-3.4);
\draw [red] (11.85,-4.1) rectangle node {RF} (11.3,-4.45);
\draw [black!60!green] (8.65,-4) rectangle node {RF} (9.35,-4.35);

\end{scope}

\draw[->] (v4) -- (9.6,2) node (v1) {};
 \tripole at (9.6,2);

\draw (3.5,0) .. controls (4,0.5) and (4,0.5) .. (4,0.8);
\node at (4.2,0.5) {\LARGE $\theta$};

\draw[dashed] (v1) -- (3.9,3.6);
\draw[dashed] (-3.9,-3.6) -- (v1) -- cycle;
\draw (7.9,1.3) .. controls (7.7,1.8) and (7.8,2.1) .. (8.2,2.4);

\node at (7.5,2) {\LARGE $\Gamma$};
\end{tikzpicture}}
\caption{Scenario configuration. The transmitter is a ULA consisting of $2M+1$ elements, each incorporating 3 orthogonal infinitesimal dipoles. The receiver is assumed to lie on the $yz$-plane. Here, $D$ is the distance between the receiver and the center of the array, $\theta$ is the elevation angle and $\Gamma$ is the angle of view of the ULA from the receiver.}
    \label{fig:Scenario} 
\end{figure}

In some parts of the paper we will study the case where the transmit array only employs the two polarizations along the $xy$-plane. In this case, the corresponding channel can be modeled by selecting the two first columns of the matrix in (\ref{eq:channelModel}). More generally, we will denote by $t_\mathrm{pol}$ and $r_\mathrm{pol}$ the number of polarizations that are employed at the transmitter and receiver respectively, and we will denote the corresponding channel as
$$
\mathbf{H}_m^{t_\mathrm{pol} \times r_\mathrm{pol}} = \left[\mathbf{H}_m\right]_{1:r_\mathrm{pol},1:t_\mathrm{pol}}
$$
which corresponds to the selection of the first $r_\mathrm{pol}$ rows and $t_\mathrm{pol}$ columns of $\mathbf{H}_m$ respectively. 
\begin{remark}
The fact that transmitter and receiver are both located on the $yz$-plane does not imply any loss of generality in the fully polarized case (whereby $t_\mathrm{pol}=r_\mathrm{pol} = 3$) since the achievable rate is independent of any rotation along the $y$-axis. However, this is not the case when less polarizations are used, e.g. when $t_\mathrm{pol}=2$. In this second case, the study presented here can easily be generalized to any general azimuth by particularizing the more general expressions derived in \cite{agustin24twc} for the uniform planar array.    
\end{remark}

Let $\mathbf{H}^{t_\mathrm{pol} \times r_\mathrm{pol}}$ denote the $r_\mathrm{pol} \times (2M+1)t_\mathrm{pol}$ global channel matrix, formed by stacking the matrices $\mathbf{H}_m^{t_\mathrm{pol} \times r_\mathrm{pol}}$ side by side, that is
\[
\mathbf{H}^{t_\mathrm{pol} \times r_\mathrm{pol}} 
= \left[\mathbf{H}_{-M}^{t_\mathrm{pol} \times r_\mathrm{pol}} ,\ldots,\mathbf{H}_{M}^{t_\mathrm{pol} \times r_\mathrm{pol}} \right].
\]
The signal at the $r_\mathrm{pol}$ outputs of the receiver can be expressed as 
\[
\mathbf{y} = \mathbf{H}^{t_\mathrm{pol} \times r_\mathrm{pol}} \mathbf{x} + \mathbf{n}
\]
 where $\mathbf{x}\in \mathbb{C}^{(2M+1)t_\mathrm{pol} \times 1}$ contains the transmitted signals and where $\mathbf{n}\in \mathbb{C}^{r_\mathrm{pol} \times 1}$ is the background noise, assumed to be complex circularly symmetric Gaussian distributed with power $\sigma^2$, i.e. $\mathbf{n} \sim  \mathcal{CN}\left(\mathbf{0},\sigma^2\mathbf{I}_{r_\mathrm{pol}}\right)$. The above signal model can be seen as an approximation of the general solution to the Helmholtz wave equation, which is expressed as the integral of the dyadic Green function times the electric current at the source \cite{Poon05}. Changing the original integral of the exact solution by the corresponding Riemann sum, we directly obtain the signal model above. When $M\rightarrow \infty$ while $\Delta_T \rightarrow 0$ the sum converges to the Riemann integral, which essentially describes the general solution to the wave equation under continuous spatial excitation. 



Assume that the input signal $\mathbf{x}$ is random, zero mean and has covariance matrix $\mathbf{Q} = \mathbb{E}\left[ \mathbf{xx}^H \right] $. It is well known \cite{Tse05,Palomar05} that if the transmitter has access to the channel matrix $\mathbf{H}^{t_\mathrm{pol} \times r_\mathrm{pol}}$, the optimum transmission rate is obtained under Gaussian signaling, i.e. $\mathbf{x}\sim \mathcal{CN}(0,\mathbf{Q})$, and is given by
\[
C = \sup_{\substack{\mathbf{Q} \geq 0 \\ \mathrm{tr}\mathbf{Q}=P}}  \log \det \left( \mathbf{I}_{r_\mathrm{pol}} + \mathbf{H}^{t_\mathrm{pol} \times r_\mathrm{pol}} \mathbf{Q} \left(\mathbf{H}^{t_\mathrm{pol} \times r_\mathrm{pol}}\right)^H / \sigma^2  \right)
\]
where $P$ is the total transmitted power. It well known that the optimum
covariance $\mathbf{Q}$ is obtained as $\mathbf{Q} = \mathbf{V} \mathbf{P}\mathbf{V} ^H$ where $\mathbf{V}$ is a $(2M+1)t_\mathrm{pol} \times r_\mathrm{pol}$ matrix obtained as the left singular vectors associated with positive singular values of the channel matrix $\mathbf{H}^{t_\mathrm{pol} \times r_\mathrm{pol}} $ and where $\mathbf{P}$  is an $ r_\mathrm{pol} \times  r_\mathrm{pol}$ diagonal matrix with non-negative elements denoted as $p_1,\ldots,p_{r_\mathrm{pol}}$ obtained from the water-filling equation. 

All this implies that the optimum transmitter can be built as $\mathbf{x} = \mathbf{V}\mathbf{P}^{1/2} \mathbf{s}$, where $\mathbf{s} \sim \mathcal{CN}(0,\mathbf{I})$ contains the i.i.d. signaling symbols that encode the transmitted message. Therefore, we may identify the number of positive diagonal entries of $\mathbf{P}$ as the maximum number of parallel transmissions (spatial degrees of freedom) that can be multiplexed in a certain scenario. Depending on the SNR and the position of the receiver, the number positive entries may oscillate between $1$ (single beamforming) and $r_\mathrm{pol}$ (maximum spatial multiplexing capabilities). The objective of this paper is to characterize the regions where these are achieved. 

\section{Holographic regime}

In order to study the behavior of the above system, we consider here a holographic approximation of the setting, whereby the number of elements of the transmit ULA is assumed to be asymptotically large ($M\rightarrow\infty$) whereas the distance between consecutive elements converges to zero ($\Delta_T\rightarrow 0$) at the same rate, so that the total aperture length converges to a constant $2L$ (i.e. $\Delta_T M \rightarrow L$). As explained above, this asymptotic limit effectively models the general integral solution to the wave equation under spatially continuous excitation. 

Now, it can easily seen that as ($M\rightarrow\infty$) the eigenvalues of $\mathbf{H}^{t_\mathrm{pol} \times r_\mathrm{pol}} (\mathbf{H}^{t_\mathrm{pol} \times r_\mathrm{pol}})^H$ increase in magnitude, so that the overall capacity increases in magnitude as well. 
Therefore, we need to scale down the total transmit power as the number of transmitters grows without bound. To that effect, we will assume that we fix the total transmitted power according to
\begin{equation}
P=\frac{\bar{P}}{(2 M+1) t_{\mathrm{pol}}} 
\end{equation}
where $\bar{P}>0$ is a fixed constant. We will also define $\mathsf{SNR}_{RX}$ as the SNR at the receiver when the transmitter employs a matched filter (that is the principal right singular vector of  $\mathbf{H}^{t_\mathrm{pol} \times r_\mathrm{pol}}$). It can readily be seen that 
\[
\mathsf{SNR}_{RX} = \frac{\bar{P}}{\sigma^2 t_{\mathrm{pol} }}\left|\frac{\xi}{\lambda}\right|^2 s_M^{(2)}
\]
where $s_M^{(2)}$ is the inverse of the harmonic mean of the square distances between the receiver and the elements of the array,
\[
s_M^{(2)}=\frac{1}{2M+1} \sum_{m=-M}^{M} \frac{1}{\Vert\mathbf{r}_m\Vert^2}.
\]
It can be shown \cite{agustin24twc} that in the holographic regime this quantity converges to a fixed positive constant, $s_M^{(2)}\rightarrow \psi_2$, where\footnote{A number of similar quantities ($\psi_4,\psi_6,\bar{\psi}_3,\psi_4,\bar{\psi}_5$) are later introduced in the paper. For even $i=2,4,6$, the quantity $\psi_i$ can be seen to be the asymptotic arithmetic average of $\Vert \mathbf{r}_m \Vert^{-i}$ when $m=-M,\ldots,M$. For $i=3,5$, $\bar{\psi}_i$ is the asymptotic arithmetic average of $ (m\Delta_T -D\sin\theta ) \Vert \mathbf{r}_m \Vert^{-(i+1)}$ along $m=-M,\ldots,M$.}
\begin{equation}\label{eq:DefPsi2}
     \psi_2 = \frac{\Gamma}{2 L D \cos \theta}
\end{equation}
and where $\Gamma$ is the angle of view of the array from the receiver (see Fig.~\ref{fig:Scenario}), that is
\begin{equation}   \label{eq:defGamma}
    \Gamma = \arctan \frac{L/D-\sin\theta}{\cos\theta} + \arctan \frac{L/D + \sin\theta}{\cos{\theta}} . 
\end{equation}
This implies that $\mathsf{SNR}_{RX}\rightarrow \overline{\mathsf{SNR}}_{RX} = \bar{P}\vert \xi / \lambda \vert^2 \psi_2 /t_{\mathrm{pol}}/\sigma^2$.
Likewise, the capacity of the system also converges to a constant that can be described as follows. 
It was shown in \cite{agustin24twc} that 
\begin{equation} \label{eq:convergence}
\frac{1}{2M+1}
\mathbf{H}^{t_\mathrm{pol} \times r_\mathrm{pol}}(\mathbf{H}^{t_\mathrm{pol} \times r_\mathrm{pol}})^H
\rightarrow \left\vert\frac{\xi}{\lambda}\right\vert^2
\overline{\mathcal{W}}^{t_\mathrm{pol} \times r_\mathrm{pol}}
\end{equation}
where $\overline{\mathcal{W}}^{t_\mathrm{pol} \times r_\mathrm{pol}}$ is a certain Hermitian matrix that takes different values depending on the number of polarizations used at the transmitter and the receiver (the actual form of this matrix will be specfied later). Let $\gamma_i$, $i=1,\ldots, r_\mathrm{pol}$ denote the eigenvalues of this matrix, sorted in descending order. Then, it follows that the achievable spectral efficiency of this system converges to a fixed quantity
\begin{equation}
\bar{C}=\sum_{i=1}^{r_\mathrm{pol}} \log \left(1+\frac{\gamma_i}{\psi_2} {s}_i\right)
\end{equation}
where the coefficients ${s}_i>0$ are scaled versions of the power allocations that are fixed according to the waterfilling equation. More specifically, the optimum power coefficients are given by
\begin{equation}
{s}_i=\left[\frac{1}{\vartheta}-\frac{\psi_2}{\gamma_i}\right]^{+}
\end{equation}
where $[\cdot]^+ = \max(\cdot,0)$ and where $\vartheta$ is the waterlevel, which is obtained as the unique solution to the equation 
\begin{equation} \label{eq:waterlevel}
\overline{\mathsf{SNR}}_{RX}=\sum_{j=1}^{r_{\mathrm{pol}}} {s}_i=\sum_{i=1}^{r_{\mathrm{pol}}}\left[\frac{1}{\vartheta}-\frac{\psi_2}{\gamma_i}\right]^{+} .
\end{equation}
Now, we observe that the total number of streams that can be spatially multiplexed (achievable degrees of freedom) corresponds to the number of coefficients $s_i$ that are different from zero. We will denote by $n_+$ this number, which may take three different values ($n_+ = 1,2,3$) depending on the received SNR ($\overline{\mathsf{SNR}}_{RX}$) and the actual position of the receiver ($D,\theta$). 
In what follows, we provide a more detailed study of $n_+$ for different choices of $t_\mathrm{pol}$. More specifically, we will always assume that the receiver employs $r_\mathrm{pol}=3$ polarizations, whereas the transmitter uses either $t_\mathrm{pol} = 3$ or $t_\mathrm{pol} = 2$ polarizations (in this last case, aligned with the $x$ and $y$ axis). 

\subsection{Fully polarized transceiver ($t_\mathrm{pol}=r_\mathrm{pol}=3$)}
For the case of a fully polarized transmitter, the matrix $\overline{\mathcal{W}}^{3 \times 3}$ takes the form \cite{agustin24twc}
\[
\overline{\mathcal{W}}^{3\times3}= \left[
\begin{array}
[c]{ccc}%
\psi_{2} & 0 & 0\\
0 & \psi_{4}D^{2}\cos^{2}\theta & \bar{\psi}_{3}D\cos\theta\\
0 & \bar{\psi}_{3}D\cos\theta & \psi_{2}-\psi_{4}D^{2}\cos^{2}\theta
\end{array}
\right]
\]
where $\psi_2$ has been defined in (\ref{eq:DefPsi2}) and where 
\begin{align*}
\bar{\psi}_{3}  &  =-\frac{D\sin\theta}{\left(  D^{2}+L^{2}\right)  ^{2}-\left(
2LD\sin\theta\right)  ^{2}}\\
\psi_{4}  &  =\frac{1}{2}\frac{1}{\left(  D\cos\theta\right)  ^{2}}\left[
\frac{\left(  D^{2}+L^{2}\right)  -2D^{2}\sin^{2}\theta}{\left(  D^{2}
+L^{2}\right)  ^{2}-\left(  2LD\sin\theta\right)  ^{2}}+ \psi_{2}\right].
\end{align*}
In order to analyze the behavior of the number of active spatial degrees of freedom $n_+$ we need to study the three eigenvalues of the matrix $\overline{\mathcal{W}}^{3\times3}$. They can be provided in closed form, as specified in the following result.
\begin{lemma}
    The three eigenvalues of $\overline{\mathcal{W}}^{3\times3}$  are given by $\gamma_1 = \psi_2$, $\gamma_2 = (\psi_2+\Delta^{-1})/2$ and $\gamma_3 = (\psi_2-\Delta^{-1})/2$, where
    $$
    \Delta = \sqrt{\left(D^2-L^2\right)^2+(2 L D \cos \theta)^2}.
    $$
    Furthermore we always have $\gamma_1>\gamma_2>\gamma_3>0$. 
\end{lemma}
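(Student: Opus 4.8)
The plan is to peel off the obvious eigenvalue using the block structure of $\overline{\mathcal{W}}^{3\times3}$, reduce the remaining two eigenvalues to a single scalar identity, and then settle the strict ordering by an elementary trigonometric bound. Since the first row and column of $\overline{\mathcal{W}}^{3\times3}$ vanish apart from the diagonal entry $\psi_2$, the matrix is the direct sum of $[\psi_2]$ and the $2\times2$ block
\[
B=\begin{bmatrix}\psi_4 D^2\cos^2\theta & \bar\psi_3 D\cos\theta\\ \bar\psi_3 D\cos\theta & \psi_2-\psi_4 D^2\cos^2\theta\end{bmatrix},
\]
so one eigenvalue is $\gamma_1=\psi_2$. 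Writing $a=\psi_4 D^2\cos^2\theta$ and $b=\bar\psi_3 D\cos\theta$, the trace of $B$ is $\psi_2$, hence its eigenvalues are $\tfrac12\bigl(\psi_2\pm\sqrt{(\psi_2-2a)^2+4b^2}\bigr)$. Thus the eigenvalue formulas in the statement reduce to the single identity $(\psi_2-2a)^2+4b^2=\Delta^{-2}$.

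To prove it, I would first record the elementary simplification $(D^2+L^2)^2-(2LD\sin\theta)^2=(D^2-L^2)^2+(2LD\cos\theta)^2=\Delta^2$ (replace $\sin^2\theta$ by $1-\cos^2\theta$); write $E=\Delta^2$ for this common quantity, which is exactly the denominator appearing in both $\psi_4$ and $\bar\psi_3$. Substituting the definition of $\psi_4$ into $a$ makes the $\psi_2$ term cancel, leaving $\psi_2-2a=-N/E$ with $N=(D^2+L^2)-2D^2\sin^2\theta$, while $b=-D^2\sin\theta\cos\theta/E$ follows directly from the definition of $\bar\psi_3$. Hence $(\psi_2-2a)^2+4b^2=(N^2+4D^4\sin^2\theta\cos^2\theta)/E^2$, and it remains to check the polynomial identity $N^2+4D^4\sin^2\theta\cos^2\theta=E$, which comes out after expanding $N^2$ and using $\sin^2\theta+\cos^2\theta=1$ to collapse the quartic terms in $D$. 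This gives $(\psi_2-2a)^2+4b^2=1/E=1/\Delta^2$, so $\gamma_2=(\psi_2+\Delta^{-1})/2$ and $\gamma_3=(\psi_2-\Delta^{-1})/2$.

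For the chain $\gamma_1>\gamma_2>\gamma_3>0$, everything reduces to the single inequality $\psi_2>\Delta^{-1}$, since $\gamma_1-\gamma_2=(\psi_2-\Delta^{-1})/2$, $\gamma_2-\gamma_3=\Delta^{-1}>0$, and $\gamma_3=(\psi_2-\Delta^{-1})/2$. As $\psi_2=\Gamma/(2LD\cos\theta)$, the claim is $\Gamma>2LD\cos\theta/\Delta$. Write $\Gamma=\alpha_1+\alpha_2$ with $\alpha_1=\arctan\frac{L/D-\sin\theta}{\cos\theta}$ and $\alpha_2=\arctan\frac{L/D+\sin\theta}{\cos\theta}$, and set $r_1=\sqrt{D^2+L^2-2LD\sin\theta}$, $r_2=\sqrt{D^2+L^2+2LD\sin\theta}$ (the distances from the receiver to the two ends of the aperture). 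Then $\cos\alpha_1=D\cos\theta/r_1$, $\sin\alpha_1=(L-D\sin\theta)/r_1$, and similarly for $\alpha_2$, with $r_1 r_2=\sqrt{E}=\Delta$; the sine addition formula gives $\sin\Gamma=2LD\cos\theta/\Delta$. Since $\sin\Gamma>0$ and $\Gamma$ is a sum of two arctangents (so $\Gamma\in(-\pi,\pi)$), we have $\Gamma\in(0,\pi)$, on which $x>\sin x$; hence $\Gamma>\sin\Gamma=2LD\cos\theta/\Delta$, i.e.\ $\psi_2>\Delta^{-1}$.

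The only slightly delicate point I anticipate is this last step: identifying $\sin\Gamma$ in closed form and, more importantly, confirming $\Gamma\in(0,\pi)$ so that the bound $x>\sin x$ may be invoked — a sign/branch verification rather than a computation. The algebraic identity in the middle paragraph is entirely mechanical.
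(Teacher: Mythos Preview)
Your proof is correct and follows the same overall skeleton as the paper: read off $\gamma_1=\psi_2$ from the block structure, compute the $2\times2$ eigenvalues, and reduce the ordering to the single inequality $\psi_2\Delta>1$. The algebraic verification you give for $(\psi_2-2a)^2+4b^2=\Delta^{-2}$ is more explicit than the paper, which simply appeals to the characteristic polynomial, but the content is identical.

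The one genuine difference is in how the inequality $\psi_2\Delta>1$ is established. The paper writes $\psi_2\Delta=\Gamma\sqrt{1+\bigl((D^2-L^2)/(2LD\cos\theta)\bigr)^2}$ and then splits into two cases: for $L\geq D$ it notes $\Gamma\geq\pi/2$ so the product exceeds $1$ trivially; for $L<D$ it collapses $\Gamma$ to a single arctangent via the addition formula and invokes $\arctan x>x/\sqrt{1+x^2}$. Your route is more uniform: by computing $\sin\Gamma=2LD\cos\theta/\Delta$ directly from the sine addition formula (which does not require any sign hypothesis on $D^2-L^2$), the claim becomes $\Gamma>\sin\Gamma$ on $(0,\pi)$, and no case distinction is needed. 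In fact the paper's arctangent bound in the $L<D$ case is exactly $\Gamma>\sin\Gamma$ in disguise, so your argument subsumes both of the paper's cases in one stroke. The branch check you flag as delicate is handled exactly as you say: $\Gamma\in(-\pi,\pi)$ as a sum of two arctangents, and $\sin\Gamma>0$ forces $\Gamma\in(0,\pi)$.
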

\begin{proof}
    The expression of the three eigenvalues can be trivially derived by finding the roots of the corresponding characteristic equation. In order to show that the ordering is correct and $\gamma_3$ is positive, we need to prove that $\Delta^{-1}<\psi_2$ or, equivalently, $\psi_2\Delta>1$. Observe that, using the definition of $\psi_2$ in (\ref{eq:DefPsi2}), we can express this quantity as
    \[
    \psi_{2}\Delta=\Gamma\sqrt{1+\left(  \frac{D^{2}-L^{2}}{2LD \cos \theta
}\right)  ^{2}}
    \]
    where $\Gamma$ has been defined in (\ref{eq:defGamma}). 
    Now, if $L\geq D$ then $\Gamma \geq\pi/2$ (see Fig.~\ref{fig:Scenario}) and the inequality is obvious because the term inside the square root is larger than or equal to $1$. If $L<D$ we can use the formula for the sum of arc-tangents and the fact that $\arctan(x)>x/\sqrt{1+x^2}$ for $x>0$ so that
\[
\Gamma=\arctan
\frac{2LD\cos\theta}{D^{2}-L^{2}}>\left(1+\left(  \frac{D^{2}-L^{2}
}{2LD \cos\theta} \right)  ^{2}\right)^{-1/2}
\]
from where $\psi_2\Delta>1$ directly follows.
\end{proof}

Based on the above description of the eigenvalues, we have now all the ingredients to characterize the minimum SNR that is required for the activation of the different spatial degrees of freedom. This can be done by noticing that the right hand side of (\ref{eq:waterlevel}) is an increasing function of $1/\vartheta$ that consists of the sum of three linear terms with intersection at the points $\vartheta^{-1} = \psi_2/\gamma_i$, $i=2,3$. Hence, the value on the right hand side of (\ref{eq:waterlevel}) at these two points will establish the two thresholds in $\overline{\mathsf{SNR}}_{RX}$ that activate two or three spatial streams respectively. A direct evaluation shows that the two thresholds take the form
\begin{align} 
    \overline{\mathsf{SNR}}^{(1)} = & \frac{\psi_2}{\gamma_2}-\frac{\psi_2}{\gamma_1} = \frac{\psi_2\Delta-1}{\psi_2\Delta + 1} \label{eq:SNRthresholds3x3}  \\ 
    \overline{\mathsf{SNR}}^{(2)} = & \frac{2\psi_2}{\gamma_3}-\frac{\psi_2}{\gamma_1} - \frac{\psi_2}{\gamma_2} = \frac{\psi_2\Delta-1}{\psi_2\Delta + 1}+ \frac{8\psi_2\Delta}{\left(\psi_2\Delta\right)^2 - 1}.\label{eq:SNRthresholds3x32}
\end{align}
We can conclude that the optimum number of active streams $n_+$ is equal to 
\begin{equation} \label{eq:activationModes}
n_+ = \left\{
\begin{array}
[c]{cc}
1 & \overline{\mathsf{SNR}}_{RX} < \overline{\mathsf{SNR}}^{(1)} \\
2 &  \overline{\mathsf{SNR}}^{(1)} \leq \overline{\mathsf{SNR}}_{RX} < \overline{\mathsf{SNR}}^{(2)}\\
3 &  \overline{\mathsf{SNR}}^{(2)}\leq \overline{\mathsf{SNR}}_{RX} .
\end{array}
\right. 
\end{equation}
In the next section we will establish how this translates into different spatial regions where $n_+ =1$, $n_+ =2$ or $n_+=3$ spatial streams can be optimally transmitted. 

\subsection{Double polarized transmitter ($t_\mathrm{pol}=2, r_\mathrm{pol}=3$)}
Consider there the case where the transmitter only employs two polarizations, which are aligned with the $x$ and $y$ axis respectively. In this case, one can show \cite{agustin24twc} that the convergence in (\ref{eq:convergence}) holds with
\begin{equation}
\overline{\mathcal{W}}^{2 \times 3}=\left[\begin{array}{ccc}
\psi_2 & 0 & 0 \\
0 & \psi_6 D_c^4 & \bar{\psi}_5 D_c^3 \\
0 & \bar{\psi}_5 D_c^3 & \psi_4 D_c^2 -\psi_6 D_c^4
\end{array}\right] 
\end{equation}
where we have introduced the short-hand notation $D_s = D \sin \theta$, $D_c = D\cos \theta$ together with 
\begin{align*}
\bar{\psi}_{5}  &  =-\frac{\left(  D^{2}+L^{2}\right)  D_s}{\left(  \left(
D^{2}+L^{2}\right)  ^{2}-\left(  2LD_s \right)  ^{2}\right)  ^{2}}\\
\psi_{6}  &  = \frac{1}{4D_c^{2}}
\frac{\left(  D^{2}+L^{2}\right)  ^{2}-4 D^2 D_s^2}{\left(  \left(
D^{2}+L^{2}\right)  ^{2}-\left(  2LD_s\right)  ^{2}\right)  ^{2}
} +\frac{3}{D_c^2}\psi_4.
\end{align*}
As in the previous subsection, we provide a closed form expression for the eigenvalues of this matrix in order to analyze the activation of the different spatial streams. 
\begin{lemma}
       The three eigenvalues of $\overline{\mathcal{W}}^{2\times3}$  are given by $\gamma_1 = \psi_2$, 
       \begin{align}
           \gamma_2 & =\frac{D_c^2}{2}\left[\psi_4+ \sqrt{\left[\psi_4-2 D_c^2 \psi_6\right]^2+4 D_c^2 \bar{\psi}_5^2}\right] \\
            \gamma_3 & =\frac{D_c^2}{2}\left[\psi_4 - \sqrt{\left[\psi_4-2 D_c^2 \psi_6\right]^2+4 D_c^2 \bar{\psi}_5^2}\right].
       \end{align}
    Furthermore we always have $\gamma_1>\gamma_2>\gamma_3>0$. 
\end{lemma}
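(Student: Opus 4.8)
The plan is to exploit the block structure of $\overline{\mathcal{W}}^{2\times3}$: its first row and column decouple the eigenvalue $\psi_2$ from the lower-right $2\times2$ block
\[
B=\begin{bmatrix}\psi_6 D_c^4 & \bar{\psi}_5 D_c^3\\ \bar{\psi}_5 D_c^3 & \psi_4 D_c^2-\psi_6 D_c^4\end{bmatrix}.
\]
First I would note that $\gamma_1=\psi_2$ is automatically an eigenvalue and that the remaining two are those of $B$, namely $\tfrac12\big(\operatorname{tr}B\pm\sqrt{(\operatorname{tr}B)^2-4\det B}\big)$. Since $\operatorname{tr}B=\psi_4 D_c^2$ and $\det B=D_c^6\big[\psi_6(\psi_4-\psi_6 D_c^2)-\bar{\psi}_5^2\big]$, rewriting the discriminant as $(\operatorname{tr}B)^2-4\det B=D_c^4\big[(\psi_4-2D_c^2\psi_6)^2+4D_c^2\bar{\psi}_5^2\big]$ produces exactly the announced closed forms for $\gamma_2$ and $\gamma_3$. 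It then only remains to prove the chain $\psi_2=\gamma_1>\gamma_2>\gamma_3>0$.

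The positivity of both $\gamma_2$ and $\gamma_3$ reduces to showing $\operatorname{tr}B>0$ and $\det B>0$. The trace is positive because, by the footnote, $\psi_4$ is the asymptotic arithmetic average of the positive numbers $\Vert\mathbf{r}_m\Vert^{-4}$. For $\det B>0$ the key is to spot a Cauchy--Schwarz inequality: writing $a_m=\Vert\mathbf{r}_m\Vert^{-3}$ and $b_m=(m\Delta_T-D\sin\theta)\Vert\mathbf{r}_m\Vert^{-3}$ and using $\Vert\mathbf{r}_m\Vert^2-D_c^2=(m\Delta_T-D\sin\theta)^2$, the same footnote identifies $\psi_6$, $\psi_4-\psi_6 D_c^2$ and $\bar{\psi}_5$ with the asymptotic averages of $a_m^2$, $b_m^2$ and $a_m b_m$ respectively, so that $\bar{\psi}_5^2\le\psi_6(\psi_4-\psi_6 D_c^2)$. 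In the holographic limit these averages become integrals of $(u^2+D_c^2)^{-3/2}$ and $u\,(u^2+D_c^2)^{-3/2}$ over $u\in[D\sin\theta-L,\,D\sin\theta+L]$; since those two functions are not proportional on an interval of positive length the inequality is strict, hence $\det B>0$, and combined with $\operatorname{tr}B>0$ this gives $\gamma_2\ge\gamma_3>0$.

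For $\gamma_1>\gamma_2$ I would show $\psi_2\mathbf{I}_2-B\succ0$ via its trace and determinant. From $\Vert\mathbf{r}_m\Vert^2\ge D_c^2$ termwise one gets $\psi_2\ge\psi_4 D_c^2=\operatorname{tr}B$, so $\operatorname{tr}(\psi_2\mathbf{I}_2-B)=2\psi_2-\psi_4 D_c^2>0$, while $\det(\psi_2\mathbf{I}_2-B)=\det B+\psi_2(\psi_2-\operatorname{tr}B)\ge\det B>0$; a symmetric $2\times2$ matrix with positive trace and determinant is positive definite, so $\gamma_2<\psi_2=\gamma_1$. For the last inequality $\gamma_2>\gamma_3$ it suffices to check that $B$ is never a scalar multiple of $\mathbf{I}_2$: if $\theta\neq0$ then $\bar{\psi}_5\neq0$ (its closed-form numerator is a nonzero multiple of $D\sin\theta$), so the off-diagonal entry is nonzero; and if $\theta=0$ the aperture $[-L,L]$ is symmetric about $u=0$, and a one-line antiderivative computation shows $\psi_4-2D_c^2\psi_6<0$, so the two diagonal entries of $B$ differ. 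Either way $B\neq c\,\mathbf{I}_2$, hence its two eigenvalues are distinct.

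I expect the determinant positivity to be the main obstacle: the rest is routine $2\times2$ linear algebra together with the elementary bound $\Vert\mathbf{r}_m\Vert\ge D_c$, but $\det B>0$ genuinely requires recognizing the Cauchy--Schwarz structure hidden in $\psi_4$, $\psi_6$ and $\bar{\psi}_5$ and handling the limiting integral representation carefully enough to get a \emph{strict} inequality. A minor secondary issue is making the degenerate case $\theta=0$ in the final step fully rigorous.
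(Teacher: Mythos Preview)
Your proposal is correct and follows essentially the same route as the paper: the paper also shows $D_c^2\psi_4<\psi_2$ by the integral comparison $D_c^2/((x-D_s)^2+D_c^2)^2\le 1/((x-D_s)^2+D_c^2)$, and proves $\det B>0$ by writing the lower $2\times2$ block as the Gram-type integral $\tfrac{1}{2L}\int_{-L}^{L}\mathbf{r}(x)\mathbf{r}(x)^T/\|\mathbf{r}(x)\|^6\,dx$ with $\mathbf{r}(x)=[D_c,\,x-D_s]^T$, which is precisely your Cauchy--Schwarz argument in continuous form. Your treatment is in fact slightly more complete than the paper's, since you explicitly verify the strict inequality $\gamma_2>\gamma_3$ (including the $\theta=0$ case), which the paper leaves implicit.
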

\begin{proof}
As before, the expression for the eigenvalues follows directly from finding the roots of the corresponding characteristic equation. It remains to establish that the ordering holds and that $\gamma_3>0$. We claim that $D_c^2\psi_4 < \psi_2$. Indeed, it can be shown \cite{agustin24twc} that $\psi_4$ can also be expressed as the integral
\begin{multline}
D_c^2 \psi_4=\frac{1}{2 L} \int_{-L}^L \frac{D_c^2}{\left((x-D_s)^2+D_c^2\right)^2} dx \\
\leq \frac{1}{2 L} \int_{-L}^L \frac{1}{\left((x-D_s)^2+ D_c^2\right)} d x=\psi_2
\end{multline}
where the inequality follows from the fact that $D_c^2\leq(x-D_s)^2+D_c^2$. Having shown this, the result will be proven if we are able to establish
\begin{equation}
\frac{1}{2} \sqrt{\left[\psi_4-2 D_c^2 \psi_6\right]^2+4 D_c^2 \bar{\psi}_5^2}<\psi_4
\end{equation}
or, equivalently
\begin{equation} \label{eq:ineq2prove}
D_c^2 \bar{\psi}_5^2<\frac{3}{4} \psi_4^2+\left[\psi_4-D_c^2 \psi_6\right]D_c^2 \psi_6.
\end{equation}
To see this, we can use the integral definitions of $\psi_4,\bar{\psi}_5,\psi_6$ in \cite{agustin24twc} and consider the matrix
\begin{equation*}
    \left[\begin{array}{cc}
D_c^2 \psi_6 & D_c \bar{\psi}_5 \\
D_c \bar{\psi}_5 & \psi_4- D_c^2 \psi_6
\end{array}\right]= \frac{1}{2 L} \int_{-L}^L \frac{\mathbf{r}(x) \mathbf{r}^H(x)}{\Vert \mathbf{r}(x) \Vert^6} dx
\end{equation*}
 where $\mathbf{r}(x) = [D_c,x-D_s]^T$. 
This matrix is positive definite by definition, so that its determinant is positive, implying that $D_c^2 \psi_6(\psi_4-D_c^2 \psi_6)>D_c^2 \bar{\psi}_5^2 $, from where (\ref{eq:ineq2prove}) follows. 
\end{proof}
Using the above result and following the same approach as in the previous subsection, we can derive a closed form expression for the threshold SNRs that guarantee the activation of the different spatial modes. Hence, one can easily establish the activation rules in (\ref{eq:activationModes}), where now the thresholds $\overline{\mathsf{SNR}}^{(1)}$ and $\overline{\mathsf{SNR}}^{(2)}$ take the expression in (\ref{eq:snrth2x31}) and (\ref{eq:snrth2x32}) as shown at the top of the next page. 
\begin{figure*}[t]
    \centering
    \normalsize
\begin{align}
\overline{\mathsf{SNR}}^{(1)} &  =\frac{D^{2}\psi_{2}\left[  D^{4}\psi
_{4}-\sqrt{\left[  D^{4}\psi_{4}-2\cos^{2}\theta\left(  D^{6}\psi_{6}\right)
\right]  ^{2}+4\cos^{2}\theta\left(  D^{5}\psi_{5}\right)  ^{2}}\right]
}{2\cos^{4}\theta\left[  D^{6}\psi_{6}\left(  D^{4}\psi_{4}-\cos^{2}\theta
D^{6}\psi_{6}\right)  -\left(  D^{5}\psi_{5}\right)  ^{2}\right]  }-1 \label{eq:snrth2x31}\\
\overline{\mathsf{SNR}}^{(2)} &  =\frac{D^{2}\psi_{2}\left[  D^{4}\psi
_{4}+3\sqrt{\left[  D^{4}\psi_{4}-2\cos^{2}\theta\left(  D^{6}\psi_{6}\right)
\right]  ^{2}+4\cos^{2}\theta\left(  D^{5}\psi_{5}\right)  ^{2}}\right]
}{2\cos^{4}\theta\left[  D^{6}\psi_{6}\left(  D^{4}\psi_{4}-\cos^{2}\theta
D^{6}\psi_{6}\right)  -\left(  D^{5}\psi_{5}\right)  ^{2}\right]  }-1 \label{eq:snrth2x32}
\end{align}
    \hrulefill
\end{figure*}
\begin{remark} \label{remark:dipsii}
    It can easily be established using the expressions of the quantities $\psi_i,\bar{\psi}_i$ that both $D^i \psi_i$ and $D^i \bar{\psi}_i$ are functions of the elevation $\theta$ and the ratio $D/L$. This implies that the corresponding SNR thresholds (as given by (\ref{eq:SNRthresholds3x3})-(\ref{eq:SNRthresholds3x32}) in the $3\times 3$ polarization case and by (\ref{eq:snrth2x31})-(\ref{eq:snrth2x32}) in the $2\times 3$ polarization case) uniquely depend on these two variables.
\end{remark}
The above conditions establish the practical availability of the different spatial modes depending on the SNR at the receiver when the transmitter is employing a spatial matched filter (beamfocusing) towards the receiver, that is $\overline{\mathsf{SNR}}_{RX}$. However, the value of $\overline{\mathsf{SNR}}_{RX}$ will decay with the distance so it is unclear which will be the spatial regions where the different number of streams will be optimally available. In the next section we focus on the characterization of these regions. 

\section{Support regions of spatial multiplexing}
In this section we study the effective coverage areas or multiplexing regions where $n_+=1,2,3$ spatial streams are available. 
Before going into the details, it should be pointed out that this study is essentially different from previous approaches that investigated the \emph{total} number of available spatial degrees of freedom of a holographic surface in various contexts \cite{Hu18,Dardari20,Pizzo22b,Ji23}. The focus here is on characterizing the regions where these spatial degrees of freedom are effectively achievable.
More specifically, in order to establish the regions where $n_+ =1,2,3$ spatial streams are available, we will fix the received SNR at a certain reference point in space that is located $L$ meters away along the broadside of the array. (This point has coordinates $(0,0,L)$ in the reference system of Fig.~\ref{fig:Scenario}.) Let us denote as $\overline{\mathsf{SNR}}_0$ the value of the received SNR at this point. Based on this, we can express the received SNR at the intended receiver as
\begin{equation} \label{eq:rel_SNR0_SNRrx}
\overline{\mathsf{SNR}}_{RX} = \overline{\mathsf{SNR}}_0 \frac{4L^2\psi_2}{\pi} = \overline{\mathsf{SNR}}_0 \frac{4(D^2\psi_2)}{(D/L)^2\pi} 
\end{equation}
which can be easily seen by observing from the expression of $\psi_2$ in (\ref{eq:DefPsi2}) that $\psi_2 = \pi / (4L^2)$  when $\theta = 0$ and $D=L$ (as it is the case in the reference point). By Remark \ref{remark:dipsii} above, the right hand side of (\ref{eq:rel_SNR0_SNRrx}) is a function of $D/L$ and $\theta$ only.

Now, let us recall from the previous section that one can establish the different activation conditions by comparing $\overline{\mathsf{SNR}}_{RX} $ against $\overline{\mathsf{SNR}}^{(1)} (\theta,D/L) $ and $\overline{\mathsf{SNR}}^{(2)} (\theta,D/L) $, where we made explicit the dependence of the two thresholds as functions of the elevation $\theta$ and the ratio $D/L$, see the activation conditions in (\ref{eq:activationModes}). Now, replacing the expression of $\overline{\mathsf{SNR}}_{RX}$ in (\ref{eq:activationModes}) with the right hand side of (\ref{eq:rel_SNR0_SNRrx}) above one can establish equivalent activation conditions on the pair $(D/L,\theta)$ that determine the activation of one, two or three spatial modes respectively. Indeed, the two boundaries can be determined by solving the equations 
\begin{equation} \label{eq:normalization}
\overline{\mathsf{SNR}}_0 = \frac{\pi}{4D^2\psi_2} \left(\frac{D}{L}\right)^2 \overline{\mathsf{SNR}}^{(i)}(\theta,D/L) 
\end{equation}
for $i=1,2$. The expression of $\overline{\mathsf{SNR}}^{(i)}(\theta,D/L)$ is different depending on the number of polarizations used, and has been obtained in (\ref{eq:SNRthresholds3x3})-(\ref{eq:SNRthresholds3x32}) and (\ref{eq:snrth2x31})-(\ref{eq:snrth2x32}) for the cased of $t_\mathrm{pol}=3$ and $t_\mathrm{pol}=2$ respectively. 

With some abuse of notation, let us denote the right hand side of (\ref{eq:normalization}) as $\overline{\mathsf{SNR}}^{(i)}_0(\theta,D/L)$, so that the boundary condition in (\ref{eq:normalization}) can be equivalently formulated as 
\begin{equation} \label{eq:normalized2}
\overline{\mathsf{SNR}}_0=\overline{\mathsf{SNR}}^{(i)}_0(\theta,D/L).
\end{equation} Fig.~\ref{fig:SNRthr3x3} and Fig.~\ref{fig:SNRthr2x3} provide a representation of $\overline{\mathsf{SNR}}_0(\theta,D/L)$ as a function of $D/L$ for the cases $t_\mathrm{pol}=r_\mathrm{pol}=3$ (Fig.~\ref{fig:SNRthr3x3}) and  $t_\mathrm{pol}=2, r_\mathrm{pol}=3$ (Fig.~\ref{fig:SNRthr2x3}) respectively.  We can graphically represent the conditions in (\ref{eq:activationModes}) by considering a horizontal line in all these plots at the value of the SNR in the reference point ($\overline{\mathsf{SNR}}_0$). The crossing points between this horizontal line and the different curves will correspond to the values of $D/L$ at the boundary of the activation regions. In general terms, we observe that $\overline{\mathsf{SNR}}_0^{(i)}(\theta,D/L)$ is an increasing function of $D/L$, indicating that there exists a single value, denoted as $(D/L)^{(i)}_\mathrm{th}$, that determines the boundary. Note, however, that there exist some specific cases where this is not the case and the corresponding activation regions will be slightly different. 

\begin{figure}[t]
    \centering
    \includegraphics[width=\columnwidth]{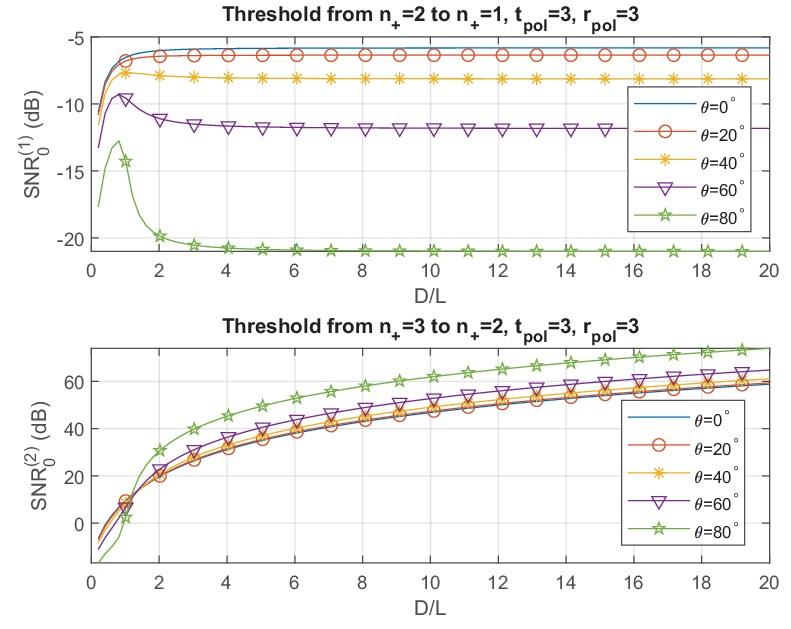}
    \caption{Representation of the right hand side of (\ref{eq:normalization}) as a function of $D/L$ for fixed elevations $\theta$ when $t_\mathrm{pol}=r_\mathrm{pol}=3$.}
    \label{fig:SNRthr3x3}
\end{figure}
\begin{figure}[t]
    \centering
    \includegraphics[width=\columnwidth]{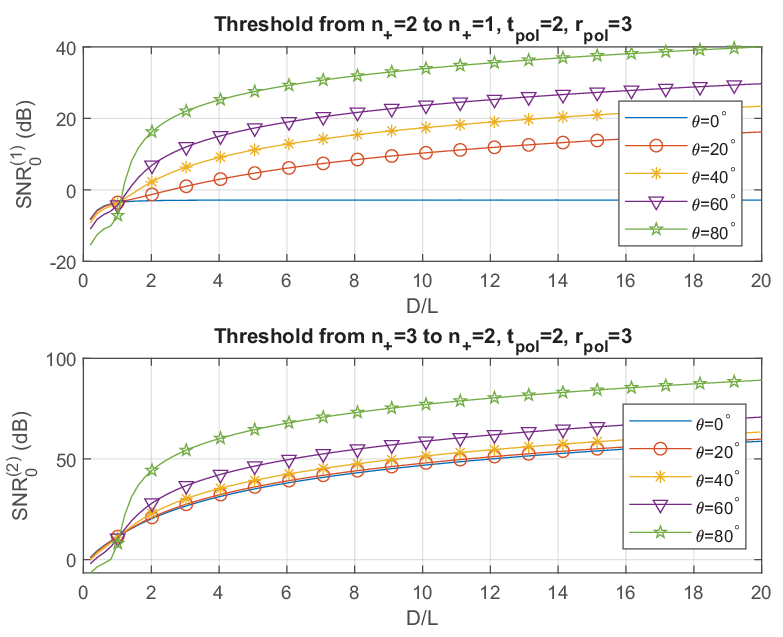}
    \caption{Representation of the right hand side of (\ref{eq:normalization}) as a function of $D/L$ for fixed elevations $\theta$ when $t_\mathrm{pol}=2, r_\mathrm{pol}=3$.}
    \label{fig:SNRthr2x3}
\end{figure}

More specifically, observing the upper plot in Fig.~\ref{fig:SNRthr3x3} one can readily see that for any fixed $\theta$, the right hand side of (\ref{eq:normalization}) converges to a constant as $D/L \rightarrow \infty$, i.e. 
\begin{equation} \label{eq:limit3x2boundary1}
\lim_{D/L\rightarrow \infty} \overline{\mathsf{SNR}}_0^{(1)}(\theta,D/L) = \frac{\pi}{12}\cos^2(\theta).
\end{equation}
This implies that if the SNR at the reference point is such that $\overline{\mathsf{SNR}}_0> \pi/12$, the transmission will always achieve $n_+=2$ spatial degrees of freedom no matter how far away the receiver is. A very similar behavior is observed in the upper plot of Fig.~\ref{fig:SNRthr2x3} when $\theta = 0$ and $t_\mathrm{pol} =2$. In this case, it can readily be seen that 
\begin{equation} \label{eq:limit3x3boundary1}
\lim_{D/L\rightarrow \infty} \overline{\mathsf{SNR}}_0^{(1)}(0,D/L) =\frac{\pi}{6}.
\end{equation}
Hence, if the SNR at the reference location is above this value, $n_+=2$ spatial degrees of freedom will always be available along the broadside of the array. For all the other situations where $\overline{\mathsf{SNR}}_0^{(i)}(\theta,D/L)$ is a monotonically increasing function of $D/L$, a higher number of spatial streams will be available as the receiver gets closer to the array. 

The thresholds obtained using the asymptotic holographic approximation can be seen to approximate quite accurately the real thresholds obtained with the actual channel response for fixed values of $M$. These real thresholds can be obtained by following the above procedure, but using the eigenvalues of the true channel matrix (for fixed $M$) instead of their asymptotic approximation under the holographic regime. Fig.~\ref{fig:errorThresholds} represents
the error between the true activation distances (finite $M$) and the asymptotic ones (when $M \rightarrow \infty$) as a function of increasing $M$. The one-sided array size is kept fixed to $L=1$ meter so that the inter-element separation $\Delta_T = 1/M$ decreases as $M$ becomes larger. One can see from this figure that the relative error incurred by the holographic approximation is quite small even for relatively low values of $M$. For example, the relative error in the determination of the boundary is lower than $6.4$\% when $M=4,\Delta_T = 25$cm and lower than $1.8$\% when $M=16,\Delta_T = 6.25$cm. The holographic approximation can therefore be safely used in more conventional situations where the inter-element separation of the ULA is comparable in magnitude with the transmission wavelength. 

\begin{figure}
    \centering
    \includegraphics[width=\columnwidth]{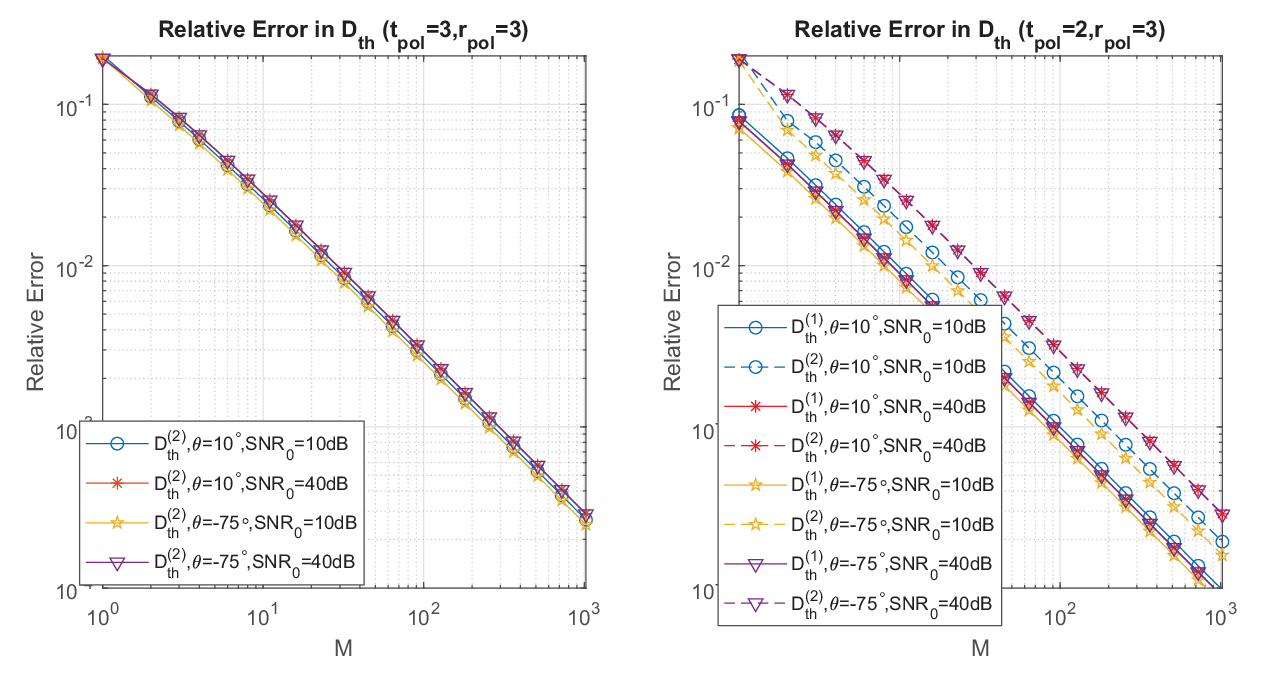}
    \caption{Relative error between the true distance activation boundaries and the holographic ones (denoted as $D_{th}^{(i)}$ for $i=1,2$) as a function of $M$ for fixed $L=\Delta_T M =1$ and some selected values of $\theta$ and $\overline{\mathsf{SNR}}_0$.}
    \label{fig:errorThresholds}
\end{figure}

\subsection{Simplified expressions for the multiplexing boundaries}

Given the complicated analytical expressions of the thresholds in (\ref{eq:SNRthresholds3x3})-(\ref{eq:SNRthresholds3x32}) and (\ref{eq:snrth2x31})-(\ref{eq:snrth2x32}), it becomes difficult to provide further insights into the geometrical description of the different multiplexing regions. We now try to shed some light into the shape of these regions by considering the approximation $D/L>>1$.
\begin{lemma} \label{lem:highD3x3}
    For any fixed $\theta$, the function $\overline{\mathsf{SNR}}_0^{(i)}(\theta,D/L)$ on the right hand side of (\ref{eq:normalized2}) when $t_\mathrm{pol}=r_\mathrm{pol}=3$ behaves for $D/L\rightarrow\infty$ as
\begin{align*}
\overline{\mathsf{SNR}}_{0}^{(1)} &  =\frac{\pi}{12}\cos^{2}\theta+O((L/D)^2)  \\
\overline{\mathsf{SNR}}_{0}^{(2)} &  =\frac{3\pi}{2\cos^{2}\theta}\left[
\left(  \frac{D}{L}\right)  ^{4}-2A_{1}\left(  \theta\right)  \left(  \frac
{D}{L}\right)  ^{2}+A_{2}\left(  \theta\right)  \right]  \\ &+O((L/D)^2)
\end{align*}
where $A_1(\theta) = (3-58/15\cos^2(\theta))/2$ and 
\[
 A_{2}\left(  \theta\right)    =2\left(  1-\frac{4}{3}\cos^{2}\theta\right).
\]
  
\end{lemma}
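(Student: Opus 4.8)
The plan is to reduce the whole statement to an expansion in the small parameter $u:=L/D\to 0$, pushed to high enough order. Everything can be phrased through the single quantity $t:=\psi_{2}\Delta$: from (\ref{eq:SNRthresholds3x3})--(\ref{eq:SNRthresholds3x32}) one has $\overline{\mathsf{SNR}}^{(1)}=\frac{t-1}{t+1}$ and $\overline{\mathsf{SNR}}^{(2)}=\frac{t-1}{t+1}+\frac{8t}{t^{2}-1}$, while (\ref{eq:normalization}), after writing $\psi_{2}=(D^{2}\psi_{2})/D^{2}$, gives $\overline{\mathsf{SNR}}_{0}^{(i)}(\theta,D/L)=\frac{\pi}{4\,(D^{2}\psi_{2})}\,(D/L)^{2}\,\overline{\mathsf{SNR}}^{(i)}(\theta,D/L)$. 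By Remark~\ref{remark:dipsii} both $t$ and $D^{2}\psi_{2}$ depend on $(\theta,D/L)$ only, so it suffices to expand these two quantities as $u\to 0$ and substitute.

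For the expansion I would start from $\Delta=D^{2}\sqrt{(1-u^{2})^{2}+(2u\cos\theta)^{2}}$ and, using (\ref{eq:DefPsi2}), $D^{2}\psi_{2}=\Gamma/(2u\cos\theta)$; since $(1-u^{2})^{2}+(2u\cos\theta)^{2}=1+2\cos(2\theta)\,u^{2}+u^{4}$ this yields $t=(D^{2}\psi_{2})\sqrt{1+2\cos(2\theta)\,u^{2}+u^{4}}$. The function $\Gamma(u)=\arctan\frac{u-\sin\theta}{\cos\theta}+\arctan\frac{u+\sin\theta}{\cos\theta}$ from (\ref{eq:defGamma}) is odd in $u$ (odd in its argument, and the two summands swap under $u\mapsto-u$), so $D^{2}\psi_{2}$ is an even power series in $u$ with value $1$ at $u=0$. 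Its coefficients come out cleanly either by differentiating $\arctan$ at $\pm\sin\theta$, where $\sin^{2}\theta+\cos^{2}\theta=1$ collapses all the denominators, or, more slickly, by recognising $D^{2}\psi_{2}=\frac{1}{2u}\int_{-u}^{u}\frac{d\sigma}{1-2\sigma\sin\theta+\sigma^{2}}$ as the Chebyshev generating function, so that $D^{2}\psi_{2}=\sum_{k\ge0}\frac{U_{2k}(\sin\theta)}{2k+1}\,u^{2k}$. Expanding the square-root factor elementarily and multiplying gives $t=1+a_{1}u^{2}+a_{2}u^{4}+a_{3}u^{6}+O(u^{8})$ with $a_{1}=\frac{2}{3}\cos^{2}\theta$ and $a_{2},a_{3}$ explicit polynomials in $\cos^{2}\theta$.

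The substitution step is then mechanical. Because $t-1=u^{2}(a_{1}+a_{2}u^{2}+\cdots)$, one gets $\overline{\mathsf{SNR}}^{(1)}=\frac{a_{1}}{2}u^{2}+O(u^{4})$; multiplying by $\frac{\pi}{4(D^{2}\psi_{2})}(D/L)^{2}=\frac{\pi}{4}u^{-2}(1+O(u^{2}))$ gives $\overline{\mathsf{SNR}}_{0}^{(1)}=\frac{\pi a_{1}}{8}+O(u^{2})=\frac{\pi}{12}\cos^{2}\theta+O(u^{2})$, which is the first claim. For $\overline{\mathsf{SNR}}^{(2)}$ one pulls $t^{2}-1=(t-1)(t+1)$ out of the denominator to get $\overline{\mathsf{SNR}}^{(2)}=\frac{4}{a_{1}}u^{-2}+c_{1}+c_{2}u^{2}+O(u^{4})$, where $c_{1}$ is rational in $a_{1},a_{2}$ and $c_{2}$ rational in $a_{1},a_{2},a_{3}$; multiplying by $\frac{\pi}{4}u^{-2}(D^{2}\psi_{2})^{-1}$ and collecting the $u^{-4}$, $u^{-2}$ and $u^{0}$ contributions produces $\overline{\mathsf{SNR}}_{0}^{(2)}=\frac{3\pi}{2\cos^{2}\theta}(D/L)^{4}+(\cdots)(D/L)^{2}+(\cdots)+O(u^{2})$. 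Factoring $\frac{3\pi}{2\cos^{2}\theta}$ out of the first three terms and matching coefficients reads off $A_{1}(\theta)$ and $A_{2}(\theta)$.

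The first claim and the leading $(D/L)^{4}$ term together with the $A_{1}$-coefficient of $\overline{\mathsf{SNR}}_{0}^{(2)}$ only use $t$ up to $O(u^{4})$, i.e.\ $a_{1}$ and $a_{2}$ (hence $U_{2}$ and $U_{4}$), so they are quick. The real obstacle will be the constant $A_{2}$: it forces $t$ to be known to $O(u^{6})$ --- so one needs $U_{6}(\sin\theta)$, equivalently the seventh derivative of $\arctan$ --- and, because $\overline{\mathsf{SNR}}^{(2)}$ carries a genuine $u^{-2}$ pole, the Laurent expansion must be tracked three orders deep, keeping every cross term between $(D^{2}\psi_{2})^{-1}$ and $\overline{\mathsf{SNR}}^{(2)}$. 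Carrying all the algebra in the single variable $\cos^{2}\theta$ (equivalently $\cos 2\theta$) throughout is what keeps this last step under control.
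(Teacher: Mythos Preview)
Your proposal is correct and follows essentially the same route as the paper: expand $D^{2}\psi_{2}$ and $\psi_{2}\Delta$ as power series in $L/D$ and substitute into the closed-form thresholds (\ref{eq:SNRthresholds3x3})--(\ref{eq:SNRthresholds3x32}) after the normalisation (\ref{eq:normalization}). Your Chebyshev generating-function identity for $D^{2}\psi_{2}$ is a tidy way to produce the coefficients, and your observation that $A_{2}$ requires the $(L/D)^{6}$ coefficient of $\psi_{2}\Delta$ (and the $(L/D)^{4}$ coefficient of $D^{2}\psi_{2}$) is accurate---the paper's sketch simply stops writing out those higher terms explicitly while noting that details are omitted.
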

\begin{proof}
    See the Appendix.
\end{proof}
By disregarding the terms that converge to zero in the above approximation, one may find a reasonably accurate description of the different activation areas. For example, the condition for the activation of at least two modes ($n_+>1$) everywhere can be approximately guaranteed if $\overline{\mathsf{SNR}}_{0}>\pi/12 = -5.82\mathrm{dB}$. When this condition holds, we have $(D/L)^{(1)}_\mathrm{th}=+\infty$ and the optimum transmitter employs at least two streams everywhere for reasonable values of the reference SNRs. The distance that separates the region between $n_+=2$ and $n_+=3$ active spatial modes, on the other hand, takes the form 
\[
\left(  D/L\right)  _{\mathrm{th}}^{(2)}=\sqrt{A_{1}\left(  \theta\right)
+\sqrt{A_{1}^{2}\left(  \theta\right)  -A_{2}\left(  \theta\right)
+\frac{2\cos^{2}\theta}{3\pi}\overline{\mathsf{SNR}}_{0}}}
\]
which essentially scales up as the fourth root of the SNR at the reference point.
A similar asymptotic analysis is possible when $t_\mathrm{pol}=2,r_\mathrm{pol}=3$. The following lemma provides a large $D/L$ approximation of the holographic boundaries between multiplexing regions. 
\begin{lemma} \label{lem:highD2x3}
    For any fixed $\theta$, the function $\overline{\mathsf{SNR}}_0^{(i)}(\theta,D/L)$ on the right hand side of (\ref{eq:normalized2}) when $t_\mathrm{pol}=2, r_\mathrm{pol}=3$ behaves for $D/L\rightarrow\infty$ as
\begin{align*}
\overline{\mathsf{SNR}}_{0}^{(1)}  & =\frac{\pi}{4}\left[  \left(  \frac{D}%
{L}\right)  ^{2}\tan^{2}\theta-\frac{4\cos^{4}\theta-16\cos^{2}\theta
+10}{3\cos^{2}\theta}\right] \\ &+O((L/D)^2) 
\\
\overline{\mathsf{SNR}}_{0}^{(2)}  & =\frac{3\pi}{2\cos^{4}\theta}\left[
\left(  \frac{D}{L}\right)  ^{4}-2B_{1}(\theta)\left(  \frac{D}{L}\right)
^{2}+B_{2}(\theta)\right]   \\ &+O((L/D)^2) 
\end{align*}
where 
\begin{align*}
B_{1}(\theta)  & =\frac{5\cos^{4}\theta-177\cos^{2}\theta+128}{60}\\
B_{2}(\theta)  & =\frac{-350\cos^{6}\theta + 10242\cos^{4}\theta-19311\cos
^{2}\theta+10122}{1575}.
\end{align*}
\end{lemma}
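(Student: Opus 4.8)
The plan is to obtain the large-$D/L$ behaviour by a Taylor expansion in the small parameter $u:=L/D$. Writing $c:=\sin\theta$, the first step is to expand the four quantities entering the threshold formulas (\ref{eq:snrth2x31})--(\ref{eq:snrth2x32}): $D^2\psi_2$, $D^4\psi_4$, $D^5\bar\psi_5$ and $D^6\psi_6$. Rescaling $x=Dt$ in the integral representations of $\psi_2,\psi_4,\bar\psi_5,\psi_6$ given in \cite{agustin24twc} yields
\begin{align*}
D^{2\nu}\psi_{2\nu} &= \frac{1}{2u}\int_{-u}^{u}\frac{dt}{(1-2ct+t^2)^{\nu}},\qquad \nu=1,2,3,\\
D^{5}\bar\psi_{5} &= \frac{1}{2u}\int_{-u}^{u}\frac{t-c}{(1-2ct+t^2)^{3}}\,dt,
\end{align*}
each of which is manifestly a smooth even function of $u$, as anticipated in Remark~\ref{remark:dipsii}. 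Expanding the integrands with the Gegenbauer generating function $(1-2ct+t^2)^{-\nu}=\sum_{n\ge 0}C_n^{(\nu)}(c)\,t^n$ and integrating term by term (the odd powers drop out) gives
\[
D^{2\nu}\psi_{2\nu}=\sum_{k\ge 0}\frac{C_{2k}^{(\nu)}(c)}{2k+1}\,u^{2k},\qquad
D^{5}\bar\psi_{5}=-c+\sum_{k\ge 1}\frac{C_{2k-1}^{(3)}(c)-c\,C_{2k}^{(3)}(c)}{2k+1}\,u^{2k},
\]
which are made explicit to the needed order from the standard closed forms of the $C_n^{(\nu)}$. (Equivalently one may Taylor-expand the closed-form expressions of the previous section directly, using $\Gamma=\arctan\frac{2u\cos\theta}{1-u^2}$ for $D^2\psi_2$.)

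The second step is to substitute these expansions into (\ref{eq:snrth2x31})--(\ref{eq:snrth2x32}). The whole calculation hinges on two cancellations at $u=0$: the radicand $[D^4\psi_4-2\cos^2\theta D^6\psi_6]^2+4\cos^2\theta(D^5\bar\psi_5)^2$ is $1+O(u^2)$, because its value at $u=0$ is $(1-2\cos^2\theta)^2+4\cos^2\theta\sin^2\theta=1$; and the denominator bracket $D^6\psi_6(D^4\psi_4-\cos^2\theta D^6\psi_6)-(D^5\bar\psi_5)^2$ vanishes at $u=0$ (its value there being $(1-\cos^2\theta)-\sin^2\theta=0$), with leading behaviour $\tfrac13u^2$, independently of $\theta$. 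Hence $\overline{\mathsf{SNR}}^{(1)}$ stays $O(1)$ (a ratio of two $O(u^2)$ series minus one), while $\overline{\mathsf{SNR}}^{(2)}$ grows like $u^{-2}$, its numerator tending to $4$ and the denominator being $2\cos^4\theta$ times an $O(u^2)$ quantity. After expanding the square root as $\sqrt{1+\alpha u^2+\cdots}=1+\tfrac{\alpha}{2}u^2+\cdots$, dividing the two power series in each threshold, and finally multiplying by $\tfrac{\pi}{4}(D/L)^2/(D^2\psi_2)$ as prescribed by (\ref{eq:normalization}), one obtains the result as a series in $(D/L)^2=u^{-2}$ and reads off its $(D/L)^4$-, $(D/L)^2$- and $(D/L)^0$-coefficients; matching these against the stated forms identifies $B_1(\theta)$, $B_2(\theta)$ and the constant term of $\overline{\mathsf{SNR}}_0^{(1)}$. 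A useful consistency check is $\theta=0$: the constant in $\overline{\mathsf{SNR}}_0^{(1)}$ then collapses to $\pi/6$, in agreement with the limit (\ref{eq:limit3x3boundary1}) derived earlier.

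The main obstacle is precision. Since the quantities of interest survive only after the leading terms of numerator and denominator cancel, the expansions must be pushed rather far: the $(D/L)^0$-coefficient of $\overline{\mathsf{SNR}}_0^{(2)}$ requires $D^4\psi_4$, $D^6\psi_6$ and $D^5\bar\psi_5$ up to order $u^6$ and $D^2\psi_2$ up to order $u^4$ (for $\overline{\mathsf{SNR}}_0^{(1)}$ order $u^4$ suffices), together with every cross term in the products, the square-root expansion and the series division. The computation is entirely mechanical but bookkeeping-heavy; carrying everything in terms of the Gegenbauer coefficients $C_n^{(\nu)}(\sin\theta)$ keeps it tractable.
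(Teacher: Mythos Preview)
Your proposal is correct and follows essentially the same route as the paper: Taylor-expand the building blocks $D^{2}\psi_{2}$, $D^{4}\psi_{4}$, $D^{5}\bar\psi_{5}$, $D^{6}\psi_{6}$ in the small parameter $L/D$ and substitute into (\ref{eq:snrth2x31})--(\ref{eq:snrth2x32}) and (\ref{eq:normalization}). Your use of the Gegenbauer generating function to organise the expansion coefficients, and your explicit identification of the two key cancellations (the radicand equalling $1$ and the denominator bracket vanishing at $u=0$, with leading term $\tfrac{1}{3}u^{2}$), are helpful additions that the paper leaves implicit, but the underlying argument is the same.
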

\begin{proof}
    See the Appendix.
\end{proof}
Here again, one can approximate the boundaries of the activation regions by disregarding the terms that converge to zero in the above expressions. Special care is needed when analyzing the threshold $\overline{\mathsf{SNR}}_{0}^{(1)}$. When $\theta=0$ this quantity converges to a constant $\pi/6$ as $D/L\rightarrow \infty$. This means that at least $n_+ = 2$ modes will always be available along the broadside of the ULA\footnote{In fact, it can readily be seen from the above expansion that at least two spatial modes will be achievable at sufficiently large distances provided that the receiver is located on the strip $\vert y_0\vert ^2 < 4\overline{\mathsf{SNR}}_0/\pi-2/3$. \label{footnote:strip}}.  
For the rest of the cases, we can approximate the two boundary distances as 
\begin{align}
\left(  D/L\right)  _{\mathrm{th}}^{(1)} &  =\sqrt{\frac{4\cos^{4}\theta
-16\cos^{2}\theta+10}{3\sin^{2}\theta}+\frac{4\overline{\mathsf{SNR}}_{0}}{\pi\tan^{2}\theta}} \label{eq:boundary1_2x3}\\
\left(  D/L\right)  _{\mathrm{th}}^{(2)} &  = \sqrt{ B_{1}(\theta)+\sqrt{B_{1}%
^{2}(\theta)-B_{2}(\theta)+\frac{2\cos^{4}\theta}{3\pi}\overline{\mathsf{SNR}
}_{0}}} \nonumber
\end{align}
where we notice that $(D/L)_{\mathrm{th}}^{(1)}$ scales up as the square root of the reference $\overline{\mathsf{SNR}}_{0}$ whereas  $(D/L)_{\mathrm{th}}^{(2)}$ scales up more slowly, as the fourth root of the same quantity. 

\subsection{Geometric representation and discussion}

Fig.~\ref{fig:boundary1} represents the boundary between the region where $n_+=2$ and the region where $n_+=1$ for different values of the reference $\overline{\mathsf{SNR}}_0$. The same figure represents in solid lines the actual asymptotic boundary that is obtained by numerically inverting the equation in (\ref{eq:normalization}) and in dash-dotted lines the approximation obtained by assuming large $D/L$ in Lemma~\ref{lem:highD2x3} (only the case $t_\mathrm{pol}=2$ is represented, since there is no boundary for reasonable SNRs when $t_\mathrm{pol}=3$). Observe that the approximation is extremely accurate for reasonable values of the reference $\overline{\mathsf{SNR}}_0$. Hence, one can safely approximate the region where three spatial streams are optimally available by the region where (\ref{eq:boundary1_2x3}) holds true. On the other hand, Fig.~\ref{fig:boundary2} represents the activation boundary between $n_+=2$ and $n_+=3$ spatial streams when $t_\mathrm{pol}=3$ (red lines) and $t_\mathrm{pol}=2$ (blue lines) for different values of the reference $\overline{\mathsf{SNR}}_0$. Both the true holographic regime (solid lines) and the high-$D/L$ approximations in Lemmas \ref{lem:highD3x3}-\ref{lem:highD2x3} (dash-dotted lines) are represented in the figure. Observe here that the activation regions for $t_\mathrm{pol}=3$ and $t_\mathrm{pol}=2$ are very similar regardless of $\overline{\mathsf{SNR}}_0$, although a slightly larger region where $n_+=3$ is obtained when $t_\mathrm{pol}=3$. 
\begin{figure}[t]
    \centering
    \includegraphics[width=\linewidth]{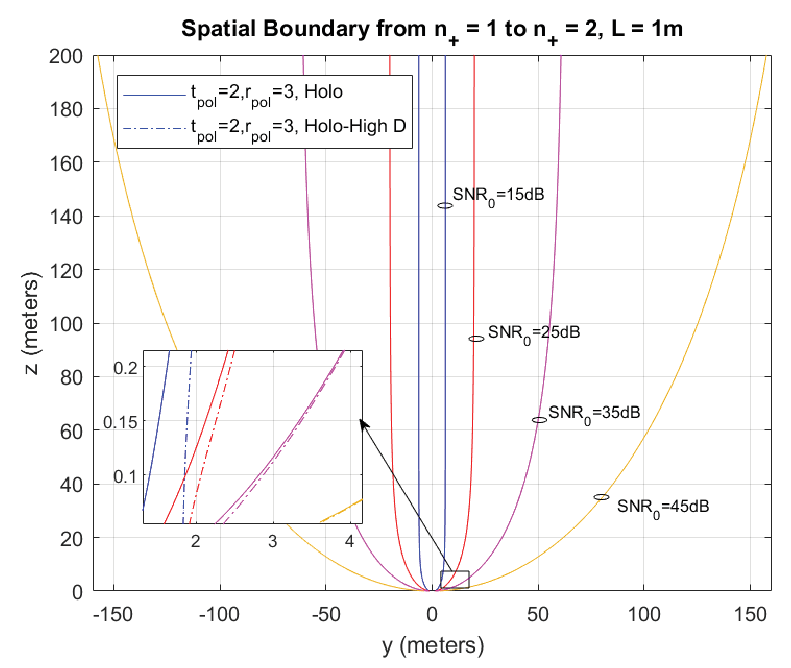}
    \caption{Boundary between $n_+=2$ and $n_+=1$ when $t_\mathrm{pol}=2, r_\mathrm{pol}=3$ for different values of $\overline{\mathsf{SNR}}_0$. Solid lines represent the actual asymptotic boundaries and dash-dotted lines represent the high $D/L$ approximations.}
    \label{fig:boundary1}
\end{figure}
\begin{figure}[t]
    \centering
    \includegraphics[width=\linewidth]{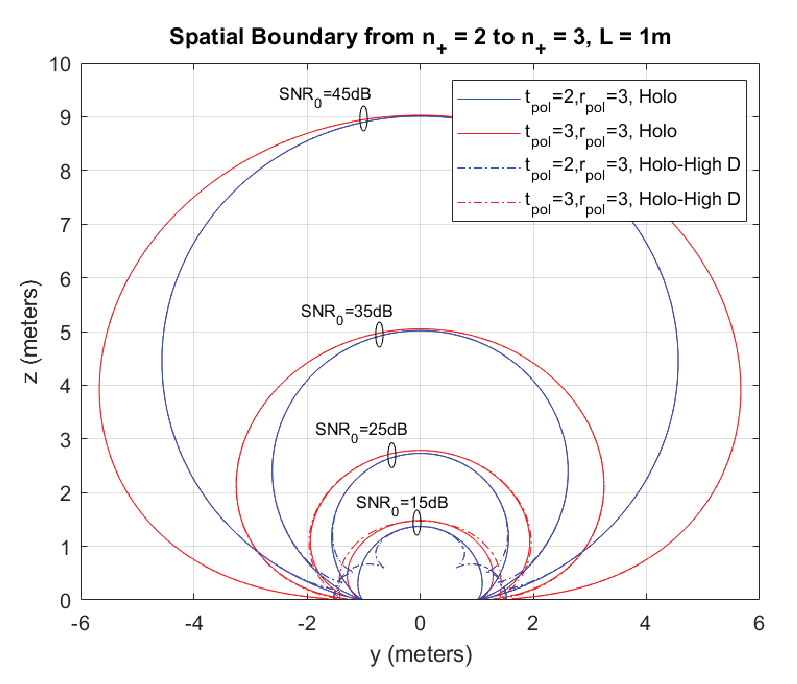}
    \caption{Boundary between the regions where $n_+=2$ and $n_+=3$ for different values of the reference $\overline{\mathsf{SNR}}_0$. Solid lines represent the actual asymptotic boundaries and dash-dotted lines represent the high $D/L$ approximations.}
    \label{fig:boundary2}
\end{figure}

From all this we can conclude that the use of three orthogonal polarizations guarantees that at least two spatial streams can be supported everywhere, provided that the transmit power is reasonably high so that the achievable SNR at a point located $L$ meters along the broadside of the array is higher than $10\log_{10}(\pi/12)=-5.82\mathrm{dB}$, cf. (\ref{eq:limit3x2boundary1}). This is in stark contrast with the setting where only two orthogonal polarizations are used, which can only support one stream in distances outside a certain vertical strip (see footnote~\ref{footnote:strip}), provided that $\overline{\mathsf{SNR}}_0>\pi/6$ according to (\ref{eq:limit3x3boundary1}). It may be argued that the two configurations are not really comparable, since the number of radiating elements is higher when $t_\mathrm{pol}=3$. This is indeed true, although the conclusions do not differ significantly when the number of radiating elements is the same in the two configurations. Note that increasing the number of elements of the ULA is equivalent to increasing $L$ in the above analysis and, since all the relevant quantities depend on the quotient $D/L$, the same conclusions hold by proportionally scaling the coverage distances $D$. 


\section{Conclusions}
A detailed study of the multi-stream transmit capabilities of a ULA towards a single element receiver (both employing multiple polarizations) has been provided. The analysis is based on a holographic approximation assuming that the radiating elements are infinitesimal dipoles. The number of elements of the ULA is assumed to be asymptotically large while the separation between consecutive elements becomes proportionally small, so that the total aperture length converges to a constant. Among other insights, the study allows to conclude that the use of three orthogonal polarizations at the transmitter guarantees that at least two spatial streams are available everywhere under minimal transmit power conditions. 

\appendices
\section{Sketch of the Proof of Lemmas \ref{lem:highD3x3} and \ref{lem:highD2x3}}
The proof is based on the analysis of the quantities $D^i \psi_i$ for $i=2,\ldots,6$ as 
$L/D  \rightarrow0$. For any fixed $\theta$, all these quantities can be seen as analytic functions of the ratio $L/D$, so one can establish their Taylor expansion around $L/D=0$, namely 
\begin{align*}
D^{2}\psi_{2} &  =1+\left(  1-4/3\cos^{2}\theta\right) (L/D)^{2}+O((L/D)^4)  \\
D^{4}\psi_{4} & = 1+2\left(  5/3-2\cos^{2}\theta\right)  (L/D)^{2} +O((L/D)^4)  \\
D^{5}\bar{\psi}_{5} &  = -\sin\theta - \left(  5-8\cos^{2}\theta\right)\sin\theta (L/D)^{2}  +O((L/D)^4)  \\
D^{6}\psi_{6} &  =1+\left(
7-8\cos^{2}\theta\right)  (L/D)^{2} +O((L/D)^4).
\end{align*}
In the same way, one can also establish that 
\begin{multline*}
    \psi_2\Delta  =1+ 2/3\cos^{2}\theta (L/D)^{2}+ \\
+2/3\cos^{2}\theta\left(  2- {11}/{5}\cos^{2}\theta\right) (L/D) ^{4}+O((L/D)^{6}).
\end{multline*}
A direct application of these expansions in (\ref{eq:SNRthresholds3x3})-(\ref{eq:SNRthresholds3x32}) and (\ref{eq:snrth2x31})-(\ref{eq:snrth2x32}) allows to establish the results in these two lemmas (details are omitted due to space constraints). 



\ifCLASSOPTIONcaptionsoff
  \newpage
\fi



\bibliographystyle{IEEEtran}
\bibliography{IEEEabrv,bibtex/bib/ICASSP25}

\end{document}